\newtheorem{definition}{Definition}
\newtheorem{theorem}{Theorem}
\newcommand\numberthis{\addtocounter{equation}{1}\tag{\theequation}}
\DeclareMathOperator*{\logdet}{log\,det}
\DeclareMathOperator*{\argmax}{arg\,max}
\DeclareMathOperator*{\maximize}{maximize\quad}
\renewcommand{\vec}[1]{\boldsymbol{\mathrm{#1}}}
\newcommand{\R}{\mathbb{R}}
\newcommand{\matx}[1]{\boldsymbol{\mathrm{#1}}}
\newcommand{\tr}[1]{\mathrm{tr}\left\{#1\right\}}
\begin{document}
\ninept

\title{Sampling and Reconstruction of Signals on Product Graphs}
\name{Guillermo Ortiz-Jiménez, Mario Coutino, Sundeep Prabhakar Chepuri, and Geert Leus\thanks{The code of this article can be found at https://gitlab.com/gortizji/product\_graphs.}}
\address{Delft University of Technology, The Netherlands}

\maketitle

\begin{abstract}
In this paper, we consider the problem of subsampling and reconstruction of signals that reside on the vertices of a product graph, such as sensor network time series, genomic signals, or product ratings in a social network. Specifically, we leverage the product structure of the underlying domain and sample nodes from the graph factors. The proposed scheme is particularly useful for processing signals on large-scale product graphs. The sampling sets are designed using a low-complexity greedy algorithm and can be proven to be near-optimal. To illustrate the developed theory, numerical experiments based on real datasets are provided for sampling 3D dynamic point clouds and for active learning in recommender systems.
\end{abstract}

\begin{keywords}
Active learning, graph signal processing, product graphs, sparse sampling, submodularity.
\end{keywords}

\section{Introduction}

Graph signal processing aims to extend basic concepts of classical signal processing developed for signals defined on Euclidean domains to signals that reside on irregular domains with a network structure \cite{gsp}. Oftentimes, large scale graphs appear as a product of several smaller graphs. For example, time series on a sensor network, can be factorized using a cycle graph to represent time, and a spatial map to represent the network \cite{jtv}. In genomics, the graph that relates the different phenotypes of a population is a product of the graphs that relate the different character phenotypes \cite{phenotype}. And, the movie ratings on a platform like Netflix can be viewed as signals living on the product of the social network of the users and the graph of relations among movies \cite{netflix}. More formally, we say that a graph is a product graph when its node set can be decomposed as a cartesian product of the nodes of two smaller factor graphs, and its edges are related with a known connection to the edges of the factors \cite{graph_theory}.

As the number of nodes increases within each factor, the total number of nodes in the product graph grows exponentially. When this happens, the tools that were designed to process data on small networks start to fail. In literature, this issue is commonly known as the \emph{curse of dimensionality}. To circumvent this issue, some authors have proposed exploiting the product structure of large graphs to parameterize graph signals with a reduced dimensionality \cite{moura}. 

In this paper, we focus on the reconstruction of graph signals that reside on the vertices of a product graph by just observing a small subset of its vertices. In particular, we propose using a structured sampling scheme with which we select a sparse subset of nodes from each factor, thereby observing the signal at a few specific nodes of the product graph. This approach contrasts with traditional graph signal processing methods which do not take into account any underlying graph factorization when designing the sampling set \cite{barbarossa,sampling_graph,sampling_graph_2,sampling_graph_3,covariance_journal,covariance_yonina,giannakis}. When the underlying graph factorization is not accounted for, the complexity of designing the sampling set scales with the total number of vertices in the graph, and therefore, the applicability of such methods to large graphs is very limited.

Our proposed scheme circumvents this issue by reducing the original product search space into the union of two much smaller spaces. Hence, avoiding the curse of dimensionality. An illustration of this is shown in Fig.~\ref{fig:sparsesensing}, where we see that selecting nodes from the factors reduces the possible candidate locations from 20 to 9. In essence, the aim is to reconstruct a signal on the product graph (rightmost in Fig.~\ref{fig:sparsesensing}) by observing a subset of nodes of the factor graphs (on the left of Fig.~\ref{fig:sparsesensing}) that generate the product graph. 

\begin{figure}[t!]
	\centering
	\includegraphics[width=\columnwidth]{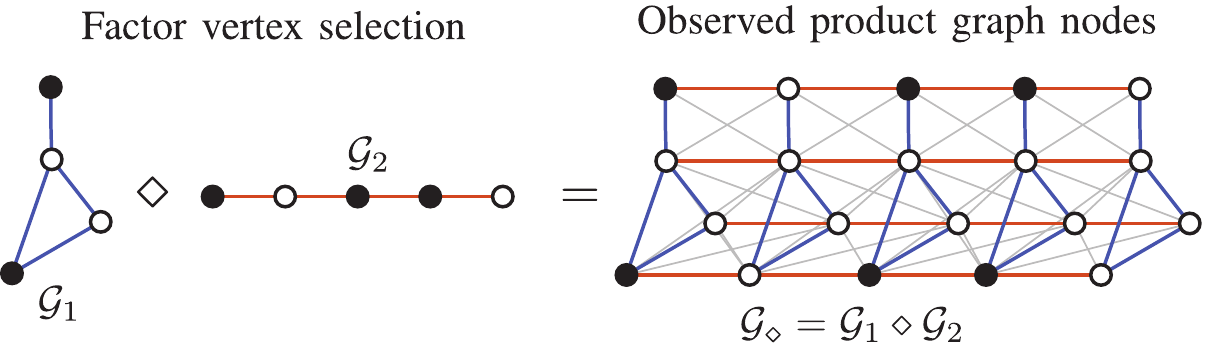}
	\caption{Illustration of the proposed sampling scheme for a product of two graphs. The black (white) dots represent the selected (unselected) vertices. $\diamond$ represents either a Cartesian (only colored edges), a Kronecker (only gray edges) or a strong product (all edges) between graphs. }
	\label{fig:sparsesensing}
\end{figure}

In this paper, we present a low-complexity near-optimal greedy algorithm to design such a structured subsampling scheme, and demonstrate its performance on real datasets related to dynamic 3D point clouds and recommender systems.

\section{Background and modeling}

Throughout this paper we will use upper (lower) case bold face letters to denote matrices (column vectors), and we will denote sets using calligraphic letters.

\subsection{Graph signals}

A graph signal $\matx{x}\in\R^{N}$ consists of a collection of $N$ values that can be associated to the nodes of a known graph $\mathcal{G}=(\mathcal{V},\mathcal{E})$, with a vertex set $\mathcal{V}$, and an edge set $\mathcal{E}$ that reveals the connections between the nodes. For the sake of exposition, we will focus on undirected graphs.

Using the graph structure, we can construct an adjacency matrix $\matx{A}\in\R^{N\times N}$ that stores the strength of the connection between edges $i$ and $j$ in its $(i,j)$th and $(j,i)$th entries, $[\matx{A}]_{i,j}=[\matx{A}]_{j,i}$. The degree of the $i$th node of a graph is defined as $d_i=\sum_{j=1}^N [\matx{A}]_{i,j}$. Related to the adjacency matrix, we can also define an alternative matrix representation known as the graph Laplacian $\matx{L}=\matx{D}-\matx{A}\in\mathbb{R}^{N\times N}$, where $\matx{D}=\mathrm{diag}\{d_1,\dots,d_N\}\in\R^{N\times N}$. Both the adjacency matrix and the graph Laplacian belong to the class of matrices that represent a valid \emph{graph-shift operator} \cite{gsp}, i.e., a matrix $\matx{S}\in\R^{N\times N}$ with a sparsity pattern defined by the graph connectivity. 

Since for undirected graphs $\matx{S}$ is symmetric, it admits an eigenvalue decomposition
\begin{equation}
	\matx{S}=\matx{U}\matx{\Lambda}\matx{U}^H=[\vec{u}_1\cdots\vec{u}_N]\mathrm{diag}\{\lambda_1,\dots,\lambda_N\}[\vec{u}_1\cdots\vec{u}_N]^H,\label{eq:evd}
\end{equation}
where the eigenvectors $\{\vec{u}_i\}_{i=1}^N$ and the eigenvalues $\{\lambda_i\}_{i=1}^R$ provide a notion of frequency for the graph setting \cite{gsp}. If working with directed graphs, one can simply replace \eqref{eq:evd} by a Jordan decomposition $\matx{S}=\matx{U}\matx{J}\matx{U}^{-1}$ \cite{jordan}.

The vectors $\{\vec{u}_i\}_{i=1}^N$ provide a Fourier-like basis for graph signals, allowing to decompose any signal $\vec{x}$ into its spectral components $\vec{x}_\mathrm{f}=\matx{U}^H\vec{x}$. In this sense, we say that a graph signal $\vec{x}$ is bandlimited \cite{gsp} when $\vec{x}_\mathrm{f}$ has $K<N$ non-zero entries.

\subsection{Product graphs}

Let $\mathcal{G}_1=(\mathcal{V}_1,\mathcal{E}_1)$ and $\mathcal{G}_2=(\mathcal{V}_2,\mathcal{E}_2)$ be two\footnote{For the sake of exposition, we restrict ourselves to the product of two graphs. Extensions to higher-order product graphs can be found in the journal version of this paper \cite{guille}.} undirected graphs with $N_1$ and $N_2$ vertices, respectively. The product graph \cite{graph_theory}, of $\mathcal{G}_1$ and $\mathcal{G}_2$ denoted with the symbol $\diamond$, is the graph given by
\begin{equation*}
	\mathcal{G}_\diamond=\mathcal{G}_1\diamond\mathcal{G}_2=(\mathcal{V}_1\times\mathcal{V}_2,\mathcal{E}_\diamond),
\end{equation*}
where $\mathcal{V}_\diamond=\mathcal{V}_1\times\mathcal{V}_2$ denotes the cartesian product of the vertex sets, and $\mathcal{E}_\diamond$ defines a valid edge set for $\mathcal{V}_\diamond$ according to the rules of the graph product. Depending on the set of rules that detemine $\mathcal{E}_\diamond$, three different product graphs are usually defined: the Cartesian product, the Kronecker product, and the strong product. For details of these rules we refer the reader to \cite{graph_theory}. The eigenvalue decomposition of the graph-shift operator of a product graph, denoted by $\matx{S}_\diamond$, is related to the eigenvalue decompositions of its factors through \cite{moura}
\begin{equation*}
	\matx{S}_\diamond=\matx{U}_\diamond\matx{\Lambda}_\diamond\matx{U}_\diamond^H=(\matx{U}_1\otimes\matx{U}_2)\matx{\Lambda}_\diamond(\matx{U}_1\otimes\matx{U}_2)^H,
\end{equation*}
where $\otimes$ denotes the Kronecker product between matrices, $\matx{U}_1$ and $\matx{U}_2$ are the eigenvectors of the graph-shift operators for $\mathcal{G}_1$ and $\mathcal{G}_2$, respectively, and $\matx{\Lambda}_\diamond$ is some diagonal matrix that depends on $\mathcal{G}_1$, $\mathcal{G}_2$ and the type of product.

Because every node in a product graph can be indexed using a pair of vertices of the factor graphs, we can rearrange any graph signal $\vec{x}\in\R^{N_1N_2}$ in a matrix form $\matx{X}\in\R^{N_2\times N_1}$ such that $\vec{x}_{i+(j-1)N_2}=[\matx{X}]_{i,j}$. The spectral decomposition of $\vec{x}$ then takes the form
\begin{equation}
	\vec{x}=(\matx{U}_1\otimes\matx{U}_2)\vec{x}_\mathrm{f}\Longleftrightarrow\matx{X}=\matx{U}_2\matx{X}_\mathrm{f}\matx{U}_1^T.\label{eq:spectral_2D}
\end{equation}

Using this formulation, we can say that a product graph signal $\matx{X}$ is bandlimited when it is simultaneously bandlimited in both domains, with a sparse $\matx{X}_\mathrm{f}\in\R^{N_2\times N_1}$ having $K_1<N_1$ columns and $K_2<N_2$ rows different than zero with known support. In such cases, ${\bf x}$ admits a low-dimensional representation as
\begin{equation}
	\vec{x}=(\tilde{\matx{U}}_1\otimes\tilde{\matx{U}}_2)\tilde{\vec{x}}_\mathrm{f}\Longleftrightarrow \matx{X}=\tilde{\matx{U}}_2\tilde{\matx{X}}_\mathrm{f}\tilde{\matx{U}}_1^T, \label{eq:reduced_model}
\end{equation}
where $\tilde{\matx{U}}_1$ and $\tilde{\matx{U}}_2$ are obtained by removing the columns of $\matx{U}_1$ and $\matx{U}_2$ corresponding to the indices of the rows and columns of $\matx{X}_\mathrm{f}$ that are zero, respectively; and $\tilde{\matx{X}}_\mathrm{f}$ and $\tilde{\vec{x}}_\mathrm{f}$ are the non-zero spectral components in $\matx{X}_\mathrm{f}$ and $\vec{x}_\mathrm{f}$. 

\section{Sampler design}

The low-dimensional representation (i.e., the notion of joint bandlimitedness) allows one to subsample graph signals by selecting just a few elements from each graph domain. Indeed, \eqref{eq:reduced_model} defines an overdetermined system of equations with $\tilde{K}=K_1K_2$ unknowns and $\tilde{N}=N_1N_2$ equations. Since
\begin{equation*}
	\mathrm{rank}(\tilde{\matx{U}}_1\otimes\tilde{\matx{U}}_2)=\mathrm{rank}(\tilde{\matx{U}}_1)\mathrm{rank}(\tilde{\matx{U}}_2),
\end{equation*}
we know that if $\tilde{\matx{U}}_1$ and $\tilde{\matx{U}}_2$ have full column rank we can recover $\tilde{\vec{x}}_\mathrm{f}$ from a subsampled version of $\vec{x}$.

Mathematically, sampling a subset of nodes from the graph factors is equivalent to selecting a subset of rows and columns from $\matx{X}$. Let $\mathcal{L}_1\subseteq\mathcal{V}_1$ and $\mathcal{L}_2\subseteq\mathcal{V}_2$ be two subsets of vertices from $\mathcal{G}_1$ and $\mathcal{G}_2$, respectively, with $|\mathcal{L}_1|=L_1\geq K_1$ and $|\mathcal{L}_2|=L_2\geq K_2$. To mathematically represent the proposed sampling scheme, we introduce two selection matrices $\matx{\Phi}_1(\mathcal{L}_1)\in\{0,1\}^{L_1\times N_1}$ and $\matx{\Phi}_2(\mathcal{L}_2)\in\{0,1\}^{L_2\times N_2}$. Then, the subsampled observations can be related to ${\bf X}$ using the following linear model
\begin{equation*}
	\matx{Y}=\matx{\Phi}_2(\mathcal{L}_2)\matx{X}\matx{\Phi}^T_1(\mathcal{L}_1)=\matx{\Phi}_2(\mathcal{L}_2)\tilde{\matx{U}}_2\tilde{\matx{X}}_\mathrm{f}\tilde{\matx{U}}_1^T\matx{\Phi}^T_1(\mathcal{L}_1),
\end{equation*}
which can equivalently be expressed in the vectorized form as
\begin{equation}
	\vec{y}=\left[\matx{\Phi}_1(\mathcal{L}_1)\tilde{\matx{U}}_1\otimes\matx{\Phi}_2(\mathcal{L}_2)\tilde{\matx{U}}_2\right]\tilde{\vec{x}}_\mathrm{f}. \label{eq:sampling}
\end{equation}

In the following, and in order to simplify the notation, whenever it will be clear, we will drop the explicit dependency of $\matx{\Phi}_1(\mathcal{L}_1)$ and $\matx{\Phi}_2(\mathcal{L}_2)$ on the sets of selected nodes $\mathcal{L}_1$ and $\mathcal{L}_2$, and we simply use $\matx{\Phi}_1$ and $\matx{\Phi}_2$.

One can estimate $\tilde{\vec{x}}_\mathrm{f}$ from $\vec{y}$ [cf. \eqref{eq:sampling}] using least-squares as
\begin{equation}
	\hat{\tilde{\vec{x}}}_\mathrm{f}=\left[\left(\matx{\Phi}_1\tilde{\matx{U}}_1\right)^\dagger\otimes\left(\matx{\Phi}_2\tilde{\matx{U}}_2\right)^\dagger\right]\vec{y},\label{eq:xf_hat}
\end{equation}
where $(\cdot)^\dagger$ denotes the Moore-Penrose left pseudo-inverse of a matrix, which due to the Kronecker structure can efficiently be computed separately for each domain. From $\hat{\tilde{\vec{x}}}_\mathrm{f}$ one can obtain $\hat{\vec{x}}$ using \eqref{eq:reduced_model}.

In the presence of additive white Gaussian noise, the performance of the least-squares solution depends on the proximity of the eigenvalues of the Fisher information matrix \cite{foundations}
\begin{align*}
	\matx{T}(\mathcal{L})&=\left(\matx{\Phi}_1\tilde{\matx{U}}_1\otimes\matx{\Phi}_2\tilde{\matx{U}}_2\right)^H\left(\matx{\Phi}_1\tilde{\matx{U}}_1\otimes\matx{\Phi}_2\tilde{\matx{U}}_2\right)\\
	&=(\matx{\Phi}_1\tilde{\matx{U}}_1)^H(\matx{\Phi}_1\tilde{\matx{U}}_1)\otimes(\matx{\Phi}_2\tilde{\matx{U}}_2)^H(\matx{\Phi}_2\tilde{\matx{U}}_2)\\
	&=\matx{T}_1(\mathcal{L}_1)\otimes\matx{T}_2(\mathcal{L}_2),\label{eq:kron_T}\numberthis
\end{align*}
where $\mathcal{L}=\mathcal{L}_1\cup\mathcal{L}_2$; and $\matx{T}_1(\mathcal{L}_1)$ and $\matx{T}_2(\mathcal{L}_2)$ are the Fisher information matrices of the sampled factor graphs. To design the sampling sets, we solve the following optimization problem
\begin{align*}
	\maximize_{\mathcal{L}_1,\mathcal{L}_2} & f\{\matx{T}(\mathcal{L})\}\numberthis \label{eq:sparsesensing}\\
	\text{subject to}\quad & |\mathcal{L}|=L\quad \mathcal{L}=\mathcal{L}_1\cup\mathcal{L}_2 \\
	& |\mathcal{L}_1|\geq K_1 \quad |\mathcal{L}_2|\geq K_2
\end{align*}
where possible candidates for $f\{\matx{T}\}$ are $\logdet\{\matx{T}\}$ \cite{cvx_sampling,logdet}, $\lambda_\text{min}\{\matx{T}\}$ \cite{correlated}, or $-\tr{\matx{T}^H\matx{T}}$ \cite{frame_potential}. In this paper, we use $\tr{\matx{T}^H\matx{T}}$, also known as frame potential, as a figure of merit, since it can be showed that minimizing the frame potential is directly related to minimizing the mean squared error of the reconstruction. Furthermore, we show that the use of the frame potential as objective function allows to develop a low-complexity algorithm that has a multiplicative near-optimality guarantee.

To prove that our algorithm achieves near-optimality we rely on the concept of \emph{submodularity} \cite{submodular_book}, a notion based on the property of diminishing returns that is useful for solving discrete combinatorial optimization problems of the form \eqref{eq:sparsesensing}. Submodularity can formally be defined as follows.

\begin{definition}[Submodular function \cite{submodular_book}]\label{def:submodular}
  A function $f:2^\mathcal{V}\rightarrow\R$ defined over the subsets of $\mathcal{V}$ is submodular if it satisfies that for every $\mathcal{X}\subseteq\mathcal{V}$, and $x,y\in\mathcal{V}\setminus\mathcal{X}$ we have
  \begin{equation}
    f(\mathcal{X}\cup\{x\})-f(\mathcal{X})\geq f(\mathcal{X}\cup\{x,y\})-f(\mathcal{X}\cup\{y\}). \label{eq:submodular_def_rep}
  \end{equation}
\end{definition}

If a submodular function is also monotone non-decreasing, i.e., $f(\mathcal{X})\leq f(\mathcal{Y})$, for all $\mathcal{X}\subseteq\mathcal{Y}$; and normalized, i.e., $f(\varnothing)=0$, then one can show that the solution of the greedy maximization Algorithm~\ref{alg:greedy_matroid}, $f(\mathcal{X}_\text{greedy})$, is 1/2 near-optimal with respect to the solution of $\max\{f(\mathcal{X}): \mathcal{X}\in\mathcal{I}\}$, where $\mathcal{M}=(\mathcal{V},\mathcal{I})$ is a matroid \cite{submodular_2}. We will show next that under some minor modifications \eqref{eq:sparsesensing} satisfies the aforementioned conditions.

\begin{algorithm}[t]
\begin{algorithmic}[1]
\REQUIRE{$\mathcal{X}=\varnothing$, $K$, $\mathcal{M}=(\mathcal{V},\mathcal{I})$}
\FOR{$i=1$ to $L$}
\STATE $s^*=\argmax_{s\notin\mathcal{X}}\{f(\mathcal{X}\cup\{s\}):\mathcal{X}\cup\{s\}\in\mathcal{I}\}$ 
\STATE $\mathcal{X}\leftarrow\mathcal{X}\cup\{s^*\}$
\ENDFOR
\RETURN $\mathcal{X}$
\end{algorithmic}
\caption{Greedy maximization of a submodular function subject to a matroid constraint}
\label{alg:greedy_matroid}
\end{algorithm}

Notice that we can express the frame potential of a sampled product graph as
\begin{equation*}
	F(\mathcal{L})\coloneqq\tr{\matx{T}^H\matx{T}}
	=\tr{\matx{T}_1^H\matx{T}_1\otimes\matx{T}_2^H\matx{T}_2}\coloneqq F_1(\mathcal{L}_1)F_2(\mathcal{L}_2),
\end{equation*}
where we see that this function can be factorized as a product of the frame potential of its factors. With a slight modification we can obtain a normalized, monotone non-decreasing, submodular function that can be employed as a surrogate cost function.

\begin{theorem}\label{thm:submodular_F}
	The set function $G:2^\mathcal{V}\rightarrow\R$ given by $G(\mathcal{S})=F(\mathcal{V})-F(\mathcal{V}\setminus\mathcal{S})$ is a monotone non-decreasing, normalized, and submodular function. Here, $\mathcal{V}\coloneqq\mathcal{V}_1\cup\mathcal{V}_2$,  and $\mathcal{S}\coloneqq\mathcal{S}_1\cup\mathcal{S}_2$, with $\mathcal{S}_1\subseteq\mathcal{V}_1$ and $\mathcal{S}_2\subseteq\mathcal{V}_2$.
\end{theorem}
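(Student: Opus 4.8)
The plan is to reduce everything to two structural facts about the factor frame potentials and then to exploit the complementation built into the definition of $G$. First I would establish that each factor frame potential $F_k$ is non-negative, monotone non-decreasing, and \emph{supermodular} (i.e., it satisfies the reverse of the inequality in Definition~\ref{def:submodular}). Writing $a_i$ for the $i$th row of $\tilde{\matx{U}}_k$ viewed as a column vector, one has $F_k(\mathcal{L}_k)=\sum_{i,j\in\mathcal{L}_k}|\langle a_i,a_j\rangle|^2$, a sum over ordered pairs of the non-negative coefficients $c_{ij}=|\langle a_i,a_j\rangle|^2$. Each term $c_{ij}\,\mathbf{1}[i\in\mathcal{L}_k]\,\mathbf{1}[j\in\mathcal{L}_k]$ is a non-negatively weighted product of two monotone modular indicators, hence monotone and supermodular, and summing preserves both properties. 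Equivalently, the marginal gain of adding $s$ equals $\|a_s\|^4+2\sum_{i\in\mathcal{L}_k}|\langle a_s,a_i\rangle|^2$, which is non-negative (giving monotonicity) and grows as $\mathcal{L}_k$ grows (giving supermodularity).

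Second I would lift these facts from the factors to $F$. Since $\mathcal{V}_1$ and $\mathcal{V}_2$ are disjoint, any $\mathcal{L}\subseteq\mathcal{V}$ splits uniquely as $\mathcal{L}_1\cup\mathcal{L}_2$ with $\mathcal{L}_k=\mathcal{L}\cap\mathcal{V}_k$, and $F(\mathcal{L})=F_1(\mathcal{L}_1)F_2(\mathcal{L}_2)$. To verify that $F$ is monotone and supermodular I would test the supermodular inequality on $\mathcal{X}=\mathcal{X}_1\cup\mathcal{X}_2$ with two added elements $x,y\notin\mathcal{X}$, distinguishing whether $x,y$ lie in the same factor or in different factors. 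If both lie in, say, $\mathcal{V}_1$, both marginal differences carry the common non-negative factor $F_2(\mathcal{X}_2)$, and the claim follows from supermodularity of $F_1$ alone; the case $x,y\in\mathcal{V}_2$ is symmetric. The delicate case is the \textbf{mixed} one, $x\in\mathcal{V}_1$ and $y\in\mathcal{V}_2$: here the two marginal gains are $[F_1(\mathcal{X}_1\cup\{x\})-F_1(\mathcal{X}_1)]\,F_2(\mathcal{X}_2)$ and $[F_1(\mathcal{X}_1\cup\{x\})-F_1(\mathcal{X}_1)]\,F_2(\mathcal{X}_2\cup\{y\})$, so after factoring out the non-negative scalar $F_1(\mathcal{X}_1\cup\{x\})-F_1(\mathcal{X}_1)$ the inequality reduces to $F_2(\mathcal{X}_2)\leq F_2(\mathcal{X}_2\cup\{y\})$, which is monotonicity of $F_2$. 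This is precisely where I expect the main obstacle to sit: products of supermodular functions need not be supermodular, and the mixed case goes through only because the two factors act on disjoint ground sets and I can combine supermodularity of one factor with both the monotonicity \emph{and} the non-negativity of the other. Monotonicity of $F$ itself follows from the same factorization together with the monotonicity and non-negativity of $F_1,F_2$.

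Finally I would convert the monotone non-decreasing, supermodular $F$ into $G$ by the standard complementation argument. Since submodularity (and likewise supermodularity) is preserved under the map $\mathcal{S}\mapsto\mathcal{V}\setminus\mathcal{S}$, the function $\mathcal{S}\mapsto F(\mathcal{V}\setminus\mathcal{S})$ is supermodular; negating it yields a submodular function, and adding the constant $F(\mathcal{V})$ leaves submodularity intact, so $G$ is submodular. For monotonicity, $\mathcal{S}\subseteq\mathcal{S}'$ gives $\mathcal{V}\setminus\mathcal{S}\supseteq\mathcal{V}\setminus\mathcal{S}'$, whence $F(\mathcal{V}\setminus\mathcal{S})\geq F(\mathcal{V}\setminus\mathcal{S}')$ and therefore $G(\mathcal{S})\leq G(\mathcal{S}')$. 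Normalization is immediate since $G(\varnothing)=F(\mathcal{V})-F(\mathcal{V})=0$, which establishes all three claimed properties.
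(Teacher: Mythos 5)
Your proof is correct and takes essentially the same route as the paper: the same factorization $F(\mathcal{L})=F_1(\mathcal{L}_1)F_2(\mathcal{L}_2)$ over the disjoint ground sets, the same two-case analysis (same-factor case reducing to supermodularity of one factor after pulling out a non-negative common factor, mixed case reducing to monotonicity plus non-negativity of the other factor), and the same complementation-and-negation step to pass from a non-decreasing supermodular function to the non-decreasing submodular $G$. The only differences are organizational rather than mathematical: you prove the monotone supermodularity of each factor frame potential from first principles via the inner-product expansion, where the paper simply cites \cite{frame_potential}, and you apply the complementation $\mathcal{S}\mapsto\mathcal{V}\setminus\mathcal{S}$ once at the end (after establishing supermodularity of $F$), whereas the paper complements the factors at the outset and verifies the submodularity inequality for $G$ directly --- mirror images of the same computation.
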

\begin{proof}
	See Appendix.
\end{proof}

Under the change of variables $\mathcal{L}=\mathcal{V}\setminus\mathcal{S}$, maximizing $G(\mathcal{S})$ is the same as maximizing $-F(\mathcal{L})$, and thus $G(\mathcal{S})$ can be used as submodular cost function in \eqref{eq:sparsesensing}. Furthermore, the constraints in \eqref{eq:sparsesensing} form a truncated partition matroid \cite{discrete_optimization} $\mathcal{M}_t=(\mathcal{V},\mathcal{I}_u\cap\mathcal{I}_p)$ with $\mathcal{I}_u=\{\mathcal{S}\subseteq\mathcal{V}:|\mathcal{S}|\leq N-L\}$ and $\mathcal{I}_p=\{\mathcal{S}\subseteq\mathcal{V}:|\mathcal{S}\cap\mathcal{V}_1|\leq N_1-K_1, |\mathcal{S}\cap\mathcal{V}_2|\leq N_2-K_2\}$, where $N=N_1+N_2$. Therefore, the solution to the greedy algorithm that solves \eqref{eq:sparsesensing}, $G(\mathcal{S}_\text{greedy})$, is 1/2 near-optimal \cite{submodular_2}, i.e. $G(\mathcal{S}_\text{greedy})\geq \frac{1}{2}G(\mathcal{S}^*)$, where $\mathcal{S}^*$ is the optimal solution to \eqref{eq:sparsesensing}.

\section{Numerical results}

In this section, we test the performance of the proposed sampling scheme for two different applications: subsampling a dynamic point cloud of a  dancer, and active querying via subsampling for a recommender system.

\subsection{Dynamic 3D point cloud}

The moving dancer dataset consists of $N_1=573$ frames in which the 3D coordinates of $N_2=1502$ markers, placed on the body of a dancer, were recorded. To represent the data as a graph signal we build a spatial 5-nearest-neighbor graph with the time-averaged position of the $N_2$ markers as suggested in \cite{jtv}; and consider time as a cycle graph with $N_1$ vertices. The resulting product graph consists of more than $850,000$ nodes, and the dynamic signal can be represented using three matrices $\matx{X}_x, \matx{X}_y,\matx{X}_z\in\R^{N_2\times N_1}$.

A visual inspection of the spectral decomposition of these signals  shows that most of the energy is confined in the first few eigenmodes of the temporal and spatial graphs. In particular, in our simulations we limit the support of the signals to the first $K_1=500$ and $K_2=70$ spatial and temporal frequencies, respectively.

Using this representation, we run our greedy method with $L=600$ to select which nodes to sample. An illustration of the quality of the results for two different frames of the point cloud video is shown in Fig.~\ref{fig:dancer}. Even with this amount of compression on such a highly dimensional signal, the reconstruction is very accurate. The point cloud's shape is hardly distorted and the point deviations are very small. In addition, we compute the relative error between the original and the estimated graph signals and we obtain a relative error of 3.44\%. We emphasize here that the estimates are obtained by observing only 525 vertices from $\mathcal{V}_1$ and $75$ from $\mathcal{V}_2$, i.e., 4.57\% of all the vertices in the product graph. We also draw 1000 different random subsets of 600 vertices from $\mathcal{V}$ with at least $K_1$ and $K_2$ vertices selected from $\mathcal{V}_1$ and $\mathcal{V}_2$, respectively. However, all these random sampling sets lead to a singular system of equations, and hence the results are not shown here.

\begin{figure}
	\centering
	\includegraphics[width=\columnwidth]{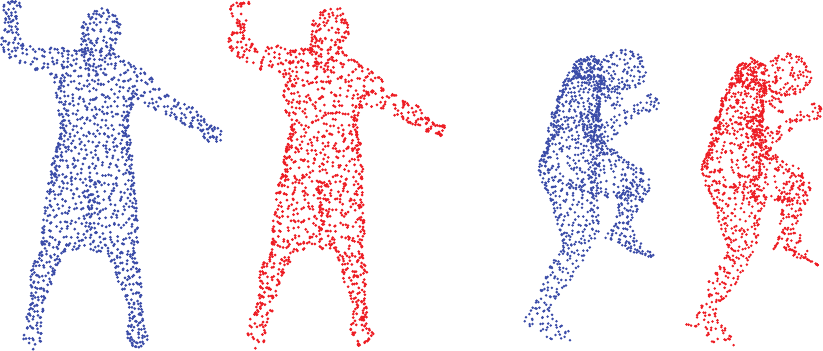}
	\caption{Two frames of the dancer dynamic point cloud. The blue dots correspond to the original data and the red dots to the subsampled version.}
	\label{fig:dancer}
\end{figure}

\subsection{Active learning for recommender system}

Most of the current recommendation algorithms solve the following estimation problem: given the past recorded preferences of a set of users for a set of products, what is the \emph{rating} that these users would give to some other set of products? In this paper, in contrast, we focus on the data acquisition phase of the recommender system. In particular, we claim that by carefully designing which users to poll and on which items, we can obtain an estimation performance on par with the state-of-the-art methods using only a fraction of the data that current methods require, and using a simple least-squares estimator. Thus, expensive and random querying can be avoided.

We showcase this idea on the MovieLens $100k$ dataset \cite{movielens} that contains partial ratings of $N_1=943$ users over $N_2=1682$ movies which are stored in a matrix $\matx{X}\in\R^{N_1\times N_2}$. This dataset also provides different features that can be used to build 10-nearest-neighbors graphs for the user and movie relations. This way, we can regard $\matx{X}$ as a signal living on the product of these two graphs.

The bandlimitidness of $\matx{X}\in\R^{N_1\times N_2}$ has already been exploited to impute its missing entries \cite{marques,kalofolias2014matrix}. In our experiments, we use $K_1=K_2=20$. Using this representation, we run our greedy algorithm with $L=100$, resulting in a selection of $L_1=25$ user and $L_2=75$ movie vertices, i.e., 1875 vertices in the product graph. We reconstruct our signals using \eqref{eq:xf_hat} and \eqref{eq:reduced_model} and compute the RMSE of the estimation using the test mask provided by the dataset. Nevertheless, since our active query method requires access to ground truth data which is not provided in the dataset, we use GRALS \cite{grals} to complete the matrix, and use its estimates when required.

\begin{figure}[t]
    \centering
    \begin{subfigure}[b]{0.46\columnwidth}
        \includegraphics[width=\textwidth]{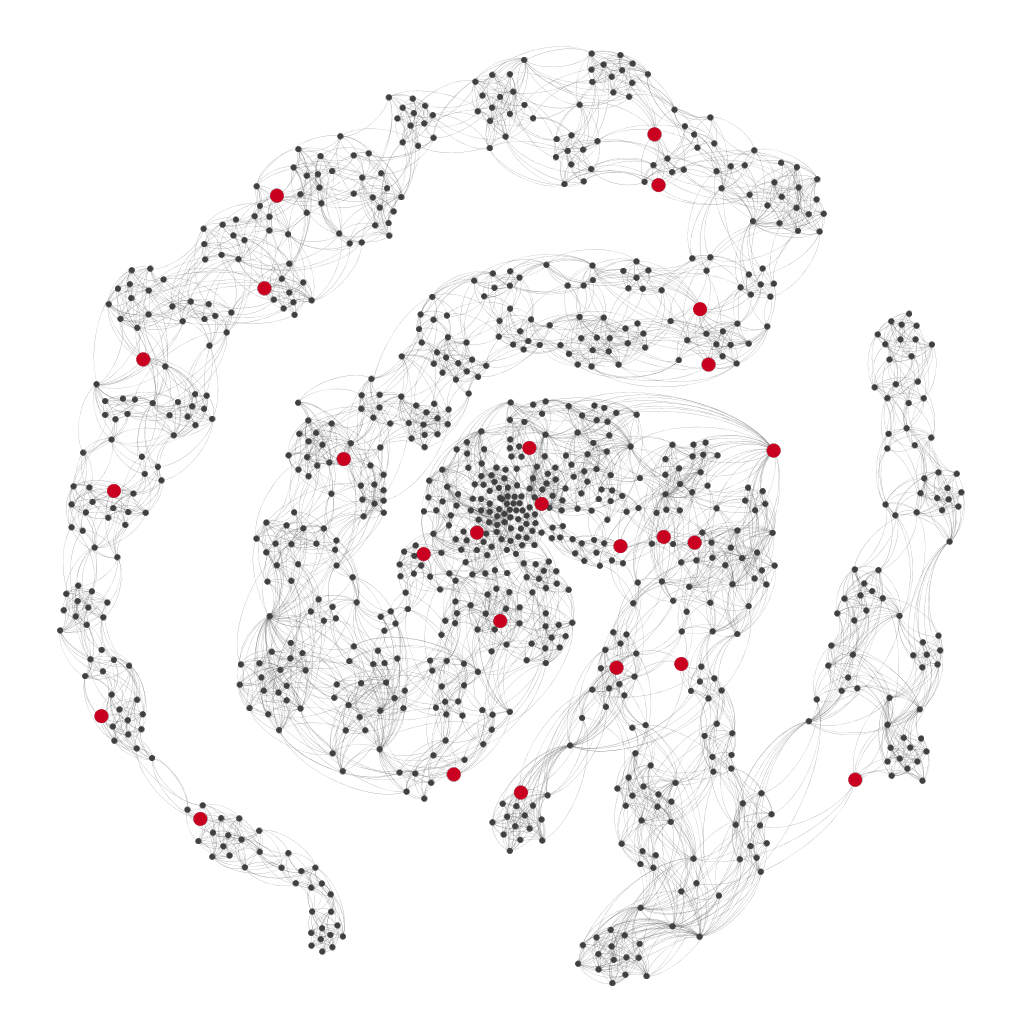}
        \caption{User graph}
        \label{fig:users}
    \end{subfigure}
    \hfill
    \begin{subfigure}[b]{0.46\columnwidth}
        \includegraphics[width=\textwidth]{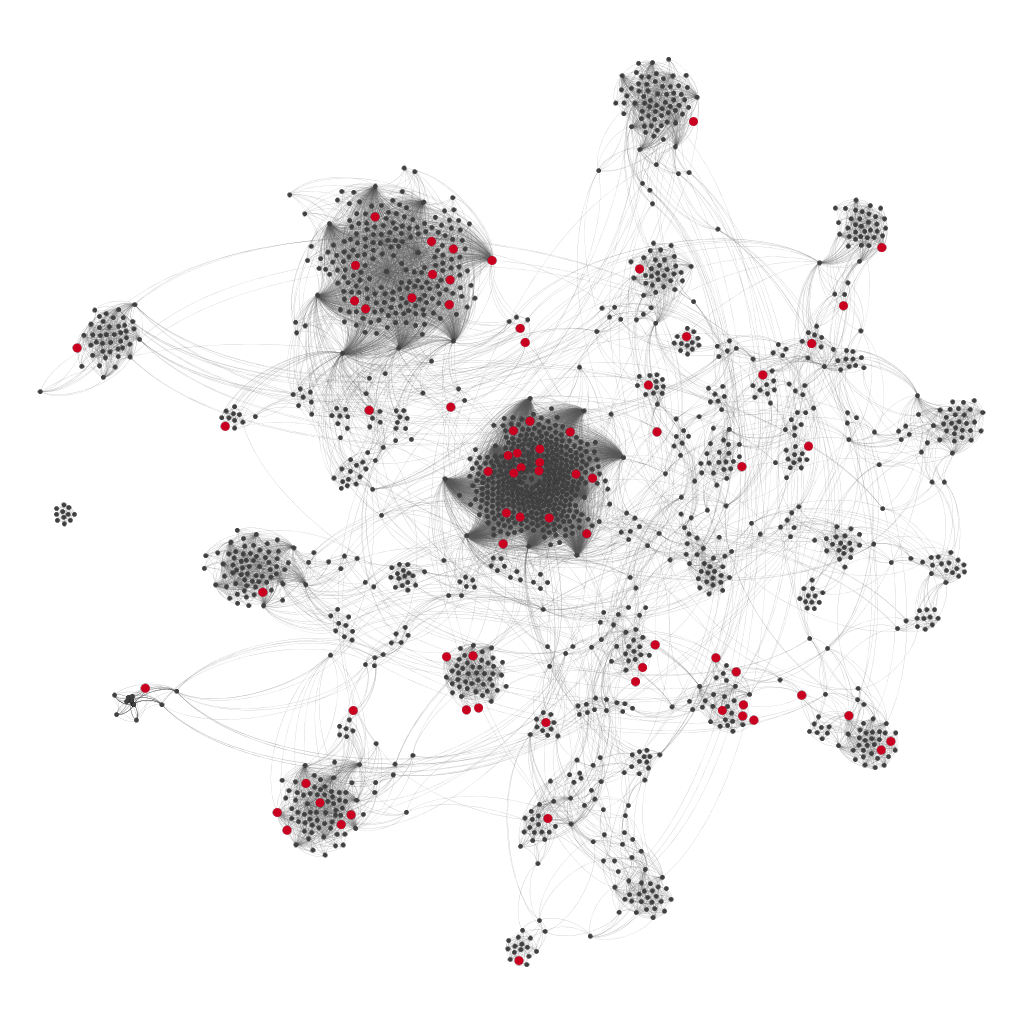}
        \caption{Movie graph}
        \label{fig:movies}
    \end{subfigure}
    \caption{User and movie graphs. The red (black) dots represent the observed (unobserved) vertices. Gephi visualization \cite{gephi}.}\label{fig:graphs}
\end{figure}

Fig.~\ref{fig:graphs}, shows the proposed active query sample resulting from the application of Algorithm~\ref{alg:greedy_matroid} to solve \eqref{eq:sparsesensing}. The user graph [cf. Fig.~\ref{fig:users}] is made out of small clusters connected in a chain-like structure, resulting in a uniformly spread distribution of observed vertices. On the other hand, the movie graph [cf. Fig.~\ref{fig:movies}] is made out of a few big and small clusters. Hence, the proposed active query sample assigns more observations to the bigger clusters and fewer observations to the smaller ones. We also compare the performance of the state-of-the-art methods to that of our algorithm [cf. Table~\ref{tab:performance}]. In light of these results, it is clear that a proper design of the sampling set allows to obtain the best performance with significantly fewer observed values, and using a much simpler non-iterative estimator.

\section{Conclusions}

In this paper, we have investigated the design of sparse samplers for the estimation of signals that reside on the vertices of a product graph. We have shown that by designing the sampling set using a combination of vertices from the graph factors we can overcome the curse of dimensionality, and design efficient subsampling schemes that guarantee a good performance for the reconstruction of graph signals. We have also proposed, a low-complexity greedy algorithm to select which vertices to sample, and provided a bound for its near-optimality with respect to the optimal subsampling set.

\begin{table}
\centering
\begin{tabular}{@{}ccc@{}}
\toprule
\textbf{Method} & \textbf{Number of samples} & \textbf{RMSE} \\ \midrule
GMC \cite{kalofolias2014matrix}            & 80,000                      & 0.996         \\ \midrule
GRALS  \cite{grals}         & 80,000                      & 0.945         \\ \midrule
sRGCNN\cite{monti2017geometric}         & 80,000                      & 0.929         \\ \midrule
GC-MC    \cite{berg2017graph}       & 80,000                      & \textbf{0.905}         \\ \midrule\midrule
Our method      & \textbf{1,875}                       & 0.9347        \\ \bottomrule
\end{tabular}
\caption{Performance on MovieLens $100k$. Baseline scores are taken from \cite{berg2017graph}.}
\label{tab:performance}
\end{table}
 
\appendix

\section{Proof of Theorem 1}
In order to simplify the derivations, let us introduce the notation
\begin{equation*}
  \bar{F}_i(\mathcal{S}_i)=F_i(\mathcal{V}_i\setminus\mathcal{S}_i)\quad i=1,2
\end{equation*}
so that $G(\mathcal{S})$ can also be written
\begin{equation*}
  G(\mathcal{S})\coloneqq F_1(\mathcal{V}_1)F_2(\mathcal{V}_2)-\bar{F}_1(\mathcal{S}_1)\bar{F}_2(\mathcal{S}_2).
\end{equation*}

We start by proving normalization, i.e.,
\begin{equation*}
	G(\varnothing)=F_1(\mathcal{V}_1)F_2(\mathcal{V}_2)-F_1(\mathcal{V}_1\setminus\varnothing)F_2(\mathcal{V}_2\setminus\varnothing)=0.
\end{equation*}

Now, recall that in \cite{frame_potential} they proved that the frame potential set function $F_i(\mathcal{L}_i)$ is a monotone non-decreasing supermodular function. The complementary function $\bar{F}_i(\mathcal{S}_i)=F_i(\mathcal{V}_i\setminus\mathcal{S}_i)$ preserves the supermodular property, but changes the monotonicity to non-increasing. Thus, when we invert its sign to obtain $-\bar{F}_i(\mathcal{S}_i)$ we obtain a monotone non-decreasing submdoular function. 

Furthermore, since the multiplication of two monotone non-decreasing functions results in a monotone non-decreasing function, and the addition of a constant preserves monotonicity, $G$ is a monotone non-decreasing set function.

Let $\{\mathcal{A}_1,\mathcal{A}_2\}$ be a partition of $\mathcal{S}$, with $\mathcal{A}_i\subseteq\mathcal{V}_i$ for $i=1,2$. To prove submodularity of $G(\mathcal{S})$ we use Definition~\ref{def:submodular}. However, since the ground set $\mathcal{V}_\diamond$ is now partitioned into the sets $\mathcal{V}_1$ and $\mathcal{V}_2$, there are two possible ways the elements $x$ and $y$ can be selected: either they both belong to the same set, or they belong to different sets. We prove that \eqref{eq:submodular_def_rep} is satisfied for both cases.

1. If $x,y\in\mathcal{V}_1$, then \eqref{eq:submodular_def_rep} can be developed as
  \begin{align*}
    \bar{F}_1&(\mathcal{A}_1)\bar{F}_2(\mathcal{A}_2)-\bar{F}_1(\mathcal{A}_1\cup\{x\})\bar{F}_2(\mathcal{A}_2)\\
    &\geq \bar{F}_1(\mathcal{A}_1\cup\{y\})\bar{F}_2(\mathcal{A}_2)- \bar{F}_1(\mathcal{A}_1\cup\{x,y\})\bar{F}_2(\mathcal{A}_2),
  \end{align*}
and simplifying
\begin{align*}
    \bar{F}_1(\mathcal{A}_1)-\bar{F}_1(\mathcal{A}_1\cup\{x\})\geq \bar{F}_1(\mathcal{A}_1\cup\{y\})- \bar{F}_1(\mathcal{A}_1\cup\{x,y\}).
\end{align*}

Multiplying both sides of the inequality by $-1$ we get
\begin{align*}
    \bar{F}_1(\mathcal{A}_1\cup\{x\})-\bar{F}_1(\mathcal{A}_1)\leq \bar{F}_1(\mathcal{A}_1\cup\{x,y\})-\bar{F}_1(\mathcal{A}_1\cup\{y\}),
\end{align*}
which is always satisfied since $\bar{F}_1$ is supermodular. The same proof holds if $x,y\in\mathcal{V}_2$.

2. If $x\in\mathcal{V}_1$ and $y\in\mathcal{V}_2$, then \eqref{eq:submodular_def_rep} can be developed as
  \begin{align*}
    \bar{F}_1&(\mathcal{A}_1)\bar{F}_2(\mathcal{A}_2)-\bar{F}_1(\mathcal{A}_1\cup\{x\})\bar{F}_2(\mathcal{A}_2)\\
    &\geq \bar{F}_1(\mathcal{A}_1)\bar{F}_2(\mathcal{A}_2\cup\{y\})- \bar{F}_1(\mathcal{A}_1\cup\{x\})\bar{F}_2(\mathcal{A}_2\cup\{y\}).
  \end{align*}
  Extracting the common factors, we obtain
  \begin{equation}
    \left[\bar{F}_1(\mathcal{A}_1)-\bar{F}_1(\mathcal{A}_1\cup\{x\})\right]\left[\bar{F}_2(\mathcal{A}_2)-\bar{F}_2(\mathcal{A}_2\cup\{y\})\right]\geq 0.\label{eq:cond_kron_3}
  \end{equation}
  Since $\bar{F}_1$ and $\bar{F}_2$ are non-increasing, we have 
  \begin{gather*}
    \bar{F}_1(\mathcal{A}_1)-\bar{F}_1(\mathcal{A}_1\cup\{x\})\geq 0\\
    \bar{F}_2(\mathcal{A}_2)-\bar{F}_2(\mathcal{A}_2\cup\{y\})\geq 0.
  \end{gather*}
  Thus, \eqref{eq:cond_kron_3} is always satisfied. The same procedure would hold for $x\in\mathcal{V}_2$ and $y\in\mathcal{V}_1$, thus proving that \eqref{eq:submodular_def_rep} is satisfied for any $\mathcal{S}\subseteq\mathcal{V}$ and $x,y\in\mathcal{V}\setminus\mathcal{S}$. Therefore, $G$ is submodular.
\newpage

\bibliographystyle{IEEEtran}
\bibliography{IEEEabrv,main}

\begin{thebibliography}{10}
\providecommand{\url}[1]{#1}
\csname url@samestyle\endcsname
\providecommand{\newblock}{\relax}
\providecommand{\bibinfo}[2]{#2}
\providecommand{\BIBentrySTDinterwordspacing}{\spaceskip=0pt\relax}
\providecommand{\BIBentryALTinterwordstretchfactor}{4}
\providecommand{\BIBentryALTinterwordspacing}{\spaceskip=\fontdimen2\font plus
\BIBentryALTinterwordstretchfactor\fontdimen3\font minus
  \fontdimen4\font\relax}
\providecommand{\BIBforeignlanguage}[2]{{%
\expandafter\ifx\csname l@#1\endcsname\relax
\typeout{** WARNING: IEEEtran.bst: No hyphenation pattern has been}%
\typeout{** loaded for the language `#1'. Using the pattern for}%
\typeout{** the default language instead.}%
\else
\language=\csname l@#1\endcsname
\fi
#2}}
\providecommand{\BIBdecl}{\relax}
\BIBdecl

\bibitem{gsp}
A.~Ortega, P.~Frossard, J.~Kova{\v c}evi{\'c}, J.~M.~F. Moura, and
  P.~Vandergheynst, ``{Graph Signal Processing: Overview, Challenges, and
  Applications},'' \emph{Proc. {IEEE}}, vol. 106, no.~5, pp. 808--828, May
  2018.

\bibitem{jtv}
F.~Grassi, A.~Loukas, N.~Perraudin, and B.~Ricaud, ``{A Time-Vertex Signal
  Processing Framework: Scalable Processing and Meaningful Representations for
  Time-Series on Graphs},'' \emph{{IEEE} Trans. Signal Process.}, vol.~66,
  no.~33, pp. 817--829, Feb 2018.

\bibitem{phenotype}
G.~P. Wagner and P.~F. Stadler, ``{Quasi-Independence, Homology and the Unity
  of Type: A Topological Theory of Characters},'' \emph{J. Theor. Biol.}, vol.
  220, no.~4, pp. 505 -- 527, Feb 2003.

\bibitem{netflix}
H.~Ma, D.~Zhou, C.~Liu, M.~R. Lyu, and I.~King, ``Recommender systems with
  social regularization,'' in \emph{Proc. ACM Int. Conf. Web Search and Data
  Mining}, 2011, pp. 287--296.

\bibitem{graph_theory}
W.~Imrich and S.~Klavzar, \emph{{Product graphs, structure and
  recognition}}.\hskip 1em plus 0.5em minus 0.4em\relax New York : Wiley, 2000.

\bibitem{moura}
A.~Sandryhaila and J.~M.~F. Moura, ``{Big Data Analysis with Signal Processing
  on Graphs: Representation and processing of massive data sets with irregular
  structure},'' \emph{{IEEE} Signal Process. Mag.}, vol.~31, no.~5, pp. 80--90,
  2014.

\bibitem{barbarossa}
M.~Tsitsvero, S.~Barbarossa, and P.~Di~Lorenzo, ``{Signals on graphs:
  Uncertainty principle and sampling},'' \emph{{IEEE} Trans. Signal Process.},
  vol.~64, no.~18, pp. 4845--4860, Sep.

\bibitem{sampling_graph}
X.~Zhu and M.~Rabbat, ``Approximating signals supported on graphs,'' in
  \emph{Proc. IEEE Int. Conf. Acoust., Speech, Signal Process.}, Kyoto, Japan,
  March 2012, pp. 3921--3924.

\bibitem{sampling_graph_2}
A.~Anis, A.~Gadde, and A.~Ortega, ``Towards a sampling theorem for signals on
  arbitrary graphs,'' in \emph{Proc. IEEE Int. Conf. Acoust., Speech, Signal
  Process.}, Florence, Italy, May 2014, pp. 3864--3868.

\bibitem{sampling_graph_3}
S.~Chen, R.~Varma, A.~Sandryhaila, and J.~Kovačević, ``{Discrete Signal
  Processing on Graphs: Sampling Theory},'' \emph{{IEEE} Trans. Signal
  Process.}, vol.~63, no.~24, pp. 6510--6523, Dec 2015.

\bibitem{covariance_journal}
S.~P. Chepuri and G.~Leus, ``{Graph Sampling for Covariance Estimation},''
  \emph{{IEEE} Trans. Signal Inf. Process. Netw.}, vol.~3, no.~3, pp. 451--466,
  Sept 2017.

\bibitem{covariance_yonina}
S.~Chepuri, Y.~Eldar, and G.~Leus, ``{Graph Sampling With and Without Input
  Priors},'' in \emph{Proc. IEEE Int. Conf. Acoust., Speech, Signal Process.},
  Calgari, Canada, April 2018, pp. 4564--4568.

\bibitem{giannakis}
D.~Romero, M.~Ma, and G.~B. Giannakis, ``Kernel-based reconstruction of graph
  signals,'' \emph{{IEEE} Trans. Signal Process.}, vol.~65, no.~3, pp.
  764--778, Feb.

\bibitem{jordan}
A.~Sandryhaila and J.~M.~F. Moura, ``Discrete signal processing on graphs:
  Frequency analysis,'' \emph{{IEEE} Trans. Signal Process.}, vol.~62, no.~12,
  pp. 3042--3054, Jun 2014.

\bibitem{guille}
G.~Ortiz-Jiménez, M.~Coutino, S.~P. Chepuri, and G.~Leus, ``Sparse sampling
  for inverse problems with tensors,'' \emph{arXiv preprint arXiv:1806.10976},
  2018.

\bibitem{foundations}
S.~P. Chepuri and G.~Leus, ``Sparse sensing for statistical inference,''
  \emph{Foundations and Trends in Signal Processing}, vol.~9, no. 3-4, pp.
  233--368, 2016.

\bibitem{cvx_sampling}
S.~Joshi and S.~Boyd, ``Sensor selection via convex optimization,''
  \emph{{IEEE} Trans. Signal Process.}, vol.~57, no.~2, pp. 451--462, Feb 2009.

\bibitem{logdet}
M.~Shamaiah, S.~Banerjee, and H.~Vikalo, ``{Greedy sensor selection: Leveraging
  submodularity},'' in \emph{Proc. 49th IEEE Conf. Decis. Control}, Atlanta,
  GA, USA, Dec 2010, pp. 2572--2577.

\bibitem{correlated}
S.~Rao, S.~P. Chepuri, and G.~Leus, ``Greedy sensor selection for non-linear
  models,'' in \emph{IEEE International Workshop on Computational Advances in
  Multi-Sensor Adaptive Processing}, Cancun, Mexico, Dec 2015, pp. 241--244.

\bibitem{frame_potential}
J.~Rainieri, A.~Chebira, and M.~Vetterli, ``{Near-Optimal Sensor Placement for
  Linear Inverse Problems},'' \emph{{IEEE} Trans. Signal Process.}, vol.~62,
  no.~5, pp. 1135--1146, March 2014.

\bibitem{submodular_book}
``Submodular functions and optimization,'' ser. Annals of Discrete Mathematics,
  S.~Fujishige, Ed.\hskip 1em plus 0.5em minus 0.4em\relax Elsevier, 2005,
  vol.~58.

\bibitem{submodular_2}
M.~L. Fisher, G.~L. Nemhauser, and L.~A. Wolsey, ``{An analysis of
  approximations for maximizing submodular set functions—II},'' in
  \emph{Polyhedral combinatorics}.\hskip 1em plus 0.5em minus 0.4em\relax
  Springer, Dec 1978, pp. 73--87.

\bibitem{discrete_optimization}
R.~G. Parker and R.~L. Rardin, ``{Polynomial Algorithms--Matroids},'' in
  \emph{Discrete Optimization}, ser. Computer Science and Scientific Computing,
  R.~G. Parker and R.~L. Rardin, Eds.\hskip 1em plus 0.5em minus 0.4em\relax
  San Diego: Academic Press, 1988, pp. 57--106.

\bibitem{movielens}
F.~M. Harper and J.~A. Konstan, ``The movielens datasets: History and
  context,'' \emph{ACM Trans. Interact. Intell. Syst.}, vol.~5, no.~4, p.~19,
  Jan 2016.

\bibitem{marques}
W.~Huang, A.~G. Marques, and A.~Ribeiro, ``Collaborative filtering via graph
  signal processing,'' in \emph{Proc. Eur. Signal Process. Conf.}, Kos, Greece,
  Aug 2017, pp. 1094--1098.

\bibitem{kalofolias2014matrix}
V.~Kalofolias, X.~Bresson, M.~Bronstein, and P.~Vandergheynst, ``Matrix
  completion on graphs,'' in \emph{Proc. Neural Inf. Process. Systems, Workshop
  "Out of the Box: Robustness in High Dimension"}, Montreal, Canada, Dec 2014.

\bibitem{grals}
N.~Rao, H.-F. Yu, P.~K. Ravikumar, and I.~S. Dhillon, ``{Collaborative
  filtering with graph information: Consistency and scalable methods},'' in
  \emph{Proc. Neural Inf. Process. Systems,}, Montreal, Canada, Dec 2015, pp.
  2107--2115.

\bibitem{gephi}
M.~Bastian, S.~Heymann, and M.~Jacomy, ``Gephi: An open source software for
  exploring and manipulating networks,'' in \emph{Int. AAAI Conf. Weblogs and
  Social Media}, San Jose, CA, USA, May 2009.

\bibitem{monti2017geometric}
F.~Monti, M.~Bronstein, and X.~Bresson, ``Geometric matrix completion with
  recurrent multi-graph neural networks,'' in \emph{Proc. Advances Neural Inf.
  Process. Systems}, Montreal, Canada, Dec 2017, pp. 3700--3710.

\bibitem{berg2017graph}
R.~v.~d. Berg, T.~N. Kipf, and M.~Welling, ``Graph convolutional matrix
  completion,'' \emph{arXiv:1706.02263}, 2017.

\end{thebibliography}

\end{document}